\newtheorem{theorem}{Theorem}[section]
\newtheorem{lemma}[theorem]{Lemma}
\newtheorem{definition}[theorem]{Definition}
\newcommand{\argmax}{\mathop\mathrm{arg\,max}}
\newcommand{\conv}{\mathop\mathrm{conv}}
\newcommand{\ALG}{\mathrm{ALG}}
\newcommand{\ALGs}{\mathcal{A}}
\newcommand{\RALGs}{\overline{\mathcal{A}}}
\newcommand{\OPT}{\mathrm{OPT}}
\newcommand{\CR}{\mathrm{CR}}
\newcommand{\kCR}{\overline{\CR}}
\newcommand{\dep}{\mathrm{dep}}
\renewcommand{\det}{\mathrm{det}}
\newcommand{\iid}{\mathrm{i.i.d.}}
\newcommand{\ind}{\mathrm{ind}}
\newcommand{\rd}{\mathrm{r.d.}}
\newcommand{\ri}{\mathrm{r.i.}}
\newcommand{\model}{\mathrm{model}}
\title{Stochastic Input Models in Online Computing}
\author{Yasushi Kawase\thanks{Supported by a Grant-in-Aid for Young Scientists (B) (No.~16K16005).}}
\affil{Tokyo Institute of Technology, Tokyo, Japan\\
\texttt{kawase.y.ab@m.titech.ac.jp}}
\date{}
\begin{document}
\maketitle
\begin{abstract}
  In this paper, we study twelve stochastic input models for online problems and reveal the relationships among the competitive ratios for the models.
  The competitive ratio is defined as the worst ratio between the expected optimal value and the expected profit of the solution obtained by the online algorithm where the input distribution is restricted according to the model.
  To handle a broad class of online problems, we use a framework called \emph{request-answer games} that is introduced by Ben-David et al.
  The stochastic input models consist of two types: known distribution and unknown distribution.
  For each type, we consider six classes of distributions:
  dependent distributions, deterministic input, independent distributions, identical independent distribution, random order of a deterministic input, and random order of independent distributions.
  As an application of the models, we consider two basic online problems, which are variants of the secretary problem and the prophet inequality problem, under the twelve stochastic input models.
  We see the difference of the competitive ratios through these problems.
\end{abstract}

\section{Introduction}
In online computing, we are given a sequence of requests, and
we need to choose an action in each step based on the current information
without knowing the full information which will be completely obtained in the future~\cite{borodin2005,komm2016}.
Since an online algorithm is forced to make decisions without knowing the entire inputs, they may later turn out not to be optimal.
The quality of an online algorithm is usually measured by the \emph{competitive ratio}~\cite{sleator1985},
which is the worst ratio between the optimal value and the profit of the solution obtained by the online algorithm.
However, for practical application, it is too pessimistic because the worst-cases rarely occur in real-world.
Thus, an average-case analysis is more suitable than the worst one in such a situation.
In order to analyze average performance, we need to assume some distribution of inputs.
Hence, in this paper, we consider \emph{stochastic input models}.
In 2004, Hajiaghayi, Kleinberg, and Parkes~\cite{HKP2004} studied the case where requests are drawn independently from some unknown distribution
and the case where the requests sequence is determined by picking a multi-set of requests and then permuting them randomly.
However, we can define various stochastic input models, and we need to consider which should we use depending on analysis objective.

Various studies related to online problems with a stochastic input have been extensively conducted on statistics literature.
A most famous example of the problems is \emph{secretary problem}~\cite{ferguson1989,freeman1983,samuels1991}.
In the (classical) secretary problem,
a decision-maker is willing to hire the best secretary out of $n$ applicants that arrive in a random order,
and the goal is to maximize the probability of choosing the best applicant.
As each applicant appears, it must be either selected or rejected, and the decision is irrevocable.
It is assumed that the decision must be based only on the relative ranks (without ties) of the applicants seen so far and the number of applicants $n$.
It is well known that one can succeed with the optimal probability $1/e$ by the following algorithm:
observe the first $n/e$ applicants without selecting, and then select the next applicant who is the best among the observed applicants~\cite{dynkin1963toc}.

The above setting is also referred to the \emph{no-information} case
because there is no information about the value of applicants.
When the decision-maker is allowed to observe the actual values of the applicants,
the problem is called the \emph{full-information secretary problem} if the values are chosen independently and identically from a known distribution.
Also, if the values are chosen independently and identically from a distribution with an unknown parameter,
it is called the \emph{partial-information secretary problem}.
Moreover, when the values are chosen independently from known (not necessarily identical) distributions, the problem is studied as \emph{prophet inequality}.
Krengel, Sucheston, and Garling~\cite{KS1977} provided a tight $2$-competitive algorithm: select the first value that is at least half of the expected maximum.

There are also a huge number of researches on the stochastic input models in online learning literature,
such as online (convex) optimization, online prediction, online classification, and multi-armed bandit~\cite{CL2006,BC2012,hazan2016}.
In many studies in the literature, it is assumed that input is drawn from an unknown distribution and one repeatedly performs the same task.

The main purpose of this paper is to compare stochastic input models in the sense of competitive ratios.
As a previous result, Mehta~\cite{mehta2013} provided a relation between four models: unknown deterministic input, random order of an unknown deterministic input, unknown i.i.d., and known i.i.d.
However, he considered only the online allocation problems.
On the other hand, this paper studies a more general form of online problems through \emph{request-answer games}.

\subsection*{Our results}
In this paper, we provide twelve stochastic input models for online problems and reveal the relationships among the competitive ratios for the models.
The models consist of two types: known distribution and unknown distribution.
For each type, we consider six classes of distributions:
dependent distributions (dep), deterministic input (det), independent distributions (ind), identical independent distribution (i.i.d.), random order of a deterministic input (r.d.), and random order of independent distributions (r.i.).
The competitive ratio is defined as the worst ratio between the expected optimal value and the expected profit of the solution obtained by the online algorithm where the input (request) distribution is restricted according to the model.
We will denote by $\CR_{\model}$ (resp.\ $\kCR_{\model}$) the competitive ratio with unknown (resp.\ known) distribution $\model$.
For example, $\CR_{\det}$, $\CR_{\rd}$, $\CR_{\iid}$, $\kCR_{\iid}$ represent competitive ratios for unknown deterministic input, random order of an unknown deterministic input, unknown i.i.d., and known i.i.d., respectively.
Then, our results can be summarized as Figure~\ref{fig:relations}.

\begin{figure}[htbp]
\begin{center}
\begin{tikzpicture}[block/.style={rectangle, draw, text centered, rounded corners, text badly centered, text width=40, fill=gray!05}]
  \tikzset{->-/.style={decoration={markings, mark=at position .5 with {\arrow{>}}},postaction={decorate}}}
  \node[block] (kdet) {$\kCR_{\det}$};
  \node[block,right=of kdet] (kiid) {$\kCR_{\iid}$};
  \node[block,below right=of kdet] (kro) {$\kCR_{\rd}$};
  \node[block,below right=of kiid] (kps) {$\kCR_{\ri}$};
  \node[block,above right=10mm and 25mm of kiid] (kind) {$\kCR_{\ind}$};
  \node[block,right=of kiid] (iid) {$\CR_{\iid}$};
  \node[block,below right=2mm and 10mm of iid] (rops) {$\CR_{\rd}$\\[10pt] $\CR_{\ri}$};
  \node[block,above right=-10mm and 10mm of rops] (det) {$\kCR_{\dep}$\\[10pt] $\CR_{\det}$\\[10pt] $\CR_{\ind}$\\[10pt] $\CR_{\dep}$};

  \foreach \u / \v in {kdet/kiid, kdet/kro, kiid/iid, kro/kps, kiid/kind, kiid/kps, iid/rops, kind/det, kps/rops, rops/det}
  \draw[->-, thick] (\v) -- (\u);
\end{tikzpicture}
\end{center}
\caption{The relationships of the competitive ratios.
  Each arrow represents an inequality (the value at the head of each arrow is at most the one at its tail)
  and ratios in the same region have the same value.}\label{fig:relations}
\end{figure}
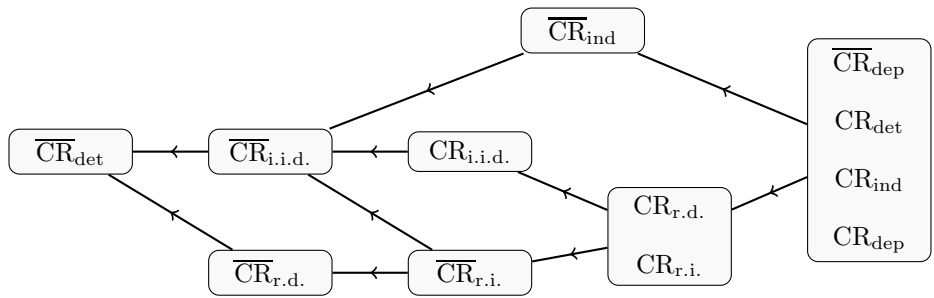

\subsection*{Related work}
Ben-David et al.~\cite{bendavid1994} introduced a general framework of online problems, which is called \emph{request-answer games}.
They compared oblivious, adaptive-online, and adaptive-offline adversaries.
The request-answer games can be seen as the following two-person zero-sum games.
The first player seeks an online algorithm that minimizes the competitive ratio;
the second player seeks a request sequence that maximizes the competitive ratio.
We can view a randomized online algorithm as a mixed strategy of the first player.
By using von Neumann's minimax theorem~\cite{vonneumann1928} and Loomis's lemma~\cite{loomis1946},
we can get a technique to obtain an upper bound of the competitive ratio.
This technique is called Yao's principle~\cite{yao1977} and used for various online problems.

Many generalizations of the secretary problem, i.e., problems in the random order model, have been studied in the competitive analysis literature.
As a natural generalization, Hajiaghayi, Kleinberg, and Parkes~\cite{HKP2004} introduced the multiple-choice secretary problem
and Kleinberg~\cite{kleinberg2005} provided an asymptotically optimal competitive algorithm when the objective is to maximize the sum of the $k$ choices.
Babaioff, Immorlica, and Kleinberg~\cite{BIK2007} introduced the matroid secretary problem.
In this model, the set of selected applicants must be an independent set in an underlying matroid.
Lachish~\cite{lachish2014} provided $O(\log\log r)$-competitive algorithm where $r$ is the rank of given matroid.
Korula and P\'{a}l~\cite{KP2009} presented 8-competitive algorithm for bipartite matching
and then Kesselheim et al.~\cite{KRTV2013} provide a tight $e$-competitive algorithm.
Babaioff et al.~\cite{BIKK2007} introduced knapsack secretary problem and proposed $10e$-competitive algorithm.
For more details, see Dinitz~\cite{dinitz2013} and Babaioff et al.~\cite{babaioff2008oaa}.

There are also a number of researches for generalizations of prophet inequalities.
Hajiaghayi, Kleinberg, and Sandholm~\cite{HKS2007} considered multiple-choice variant and
they provided a prophet inequality when the objective is to maximize the sum of the $k$ choices.
Kleinberg and Weinberg~\cite{KW2012} studied the matroid constrained version and gave a tight $2$-competitive algorithm.
Rubinstein and Singla~\cite{RS2017} introduced framework for combinatorial valuation functions.
Moreover, Esfandiari et al.~\cite{EHLM2015} introduced a prophet secretary problem, which is a natural combination of the prophet inequality problem and the secretary problem.
In their setting, the values are chosen independently from known (not necessarily identical) distributions but arrive in a random order.

\subsection*{Paper Organization}
Section~\ref{sec:model} describes formal definitions of our models via request-answer games.
Section~\ref{sec:example} discusses some variants of the secretary problem and the prophet inequality problem for our models.
In Section~\ref{sec:main}, we give the proof of our results shown in Figure~\ref{fig:relations}
and show that we cannot simplify the relations.

\section{Model}\label{sec:model}
In this section, we introduce request-answer games with stochastic input models.

A \emph{request-answer system} is a tuple \((R,A,f,n)\) where each component is defined as follows.
There is a set of \emph{requests} \(R\) and a set of \emph{answers} \(A\).
Throughout the paper, we assume that the sets $R$ and $A$ are finite.
A positive integer $n$ represents the number of rounds.
Let \(f:\,R^n\times A^n\to \mathbb{R}_+\) denote a \emph{utility function},
where $\mathbb{R}_+$ is the set of nonnegative real numbers.

A \emph{deterministic online algorithm} $\ALG$ is a sequence of functions \(g_i: R^i\to A\) \((i=1,2,\dots,n)\).
Given a deterministic online algorithm \(\ALG=(g_1,\dots,g_n)\) and 
a request sequence \(\bm{r}=(r_1,\dots,r_n)\in R^n\),
the output is an answer sequence $\ALG[\bm{r}]=(a_1,\dots,a_n)\in A^n$
where \(a_i=g_i(r_1,\dots,r_i)\) for \(i=1,\dots,n\).
The \emph{utility} incurred by \(\ALG\) on \(\bm{r}\), denoted by \(\ALG(\bm{r})\) is defined as
\begin{align*}
\ALG(\bm{r})=f(\bm{r},\ALG[\bm{r}]).
\end{align*}

A \emph{randomized online algorithm} $\ALG$ is defined as a probability distribution over 
the set of all deterministic online algorithms $\ALGs=\{(g_1,\dots,g_n)\mid g_i:R^i\to A~(i=1,\dots,n)\}$.
The \emph{utility} incurred by \(\ALG\) on \(\bm{r}\), denoted by \(\ALG(\bm{r})\) is defined as
\begin{align*}
\ALG(\bm{r})=\mathbb{E}_x[f(\bm{r},\ALG_x[\bm{r}])]
\end{align*}
where we use $\mathbb{E}_x$ to denote the expectation with respect to the distribution
over the set $\ALGs=\{\ALG_x\}$, which defines $\ALG$.
We will denote by $\RALGs$ the set of randomized algorithms.
We remark that $\RALGs$ is a convex compact set since $\ALGs$ is a finite set by the assumption that $R$ and $A$ are finite.

The performance of an online algorithm is measured
by the \emph{competitive ratio}---the ratio between its value and the optimal value for the worst request sequence.
Note that we only discuss in this paper the \emph{oblivious adversary}, i.e., the request sequence does not depend on the randomized results of the algorithm.
This is because we want to consider weaker adversaries than the standard model and
the adaptive adversary and stochastic input models are incompatible in most cases.

Given a request sequence \(\bm{r}\in R^n\), the \emph{optimal offline utility} on \(\bm{r}\) is defined as
\begin{align*}
\OPT(\bm{r})=\max\{f(\bm{r},\bm{a})\mid \bm{a}\in A^n\}.
\end{align*}

Let $\mathcal{S}_n$ be the set of permutations of $[n]$, where $[n]=\{1,\dots,n\}$.
For a permutation $\sigma\in\mathcal{S}_n$ and a sequence $\bm{x}=(x_1,\dots,x_n)$,
let us denote $\bm{x}^{\sigma}=(x_{\sigma(1)},\dots,x_{\sigma(n)})$.


\begin{definition}[Competitive Ratios]
For a request-answer game \((R,A,f,n)\),
a randomized online algorithm \(\ALG\in \RALGs\) is \(\rho\)-competitive for a distribution $D$ over the request sequences $R^n$ if
\begin{align*}
\mathbb{E}_{\bm{r}\sim D}[\OPT(\bm{r})]\le \rho\cdot \mathbb{E}_{\bm{r}\sim D}[\ALG(\bm{r})].
\end{align*}
Let $\mathcal{D}$ be a compact subset\footnotemark{} of the distributions over the request sequences $R^n$.
\footnotetext{We assume compactness to simplify the exposition.}

We define the competitive ratio of \((R,A,f,n)\) for \emph{unknown} distribution in $\mathcal{D}$ as
\begin{align}
\CR_{\mathcal{D}}(R,A,f,n)=\inf_{\ALG\in\RALGs}\sup_{D\in\mathcal{D}}\frac{\mathbb{E}_{\bm{r}\sim D}[\OPT(\bm{r})]}{\mathbb{E}_{\bm{r}\sim D}[\ALG(\bm{r})]} \label{eq:CR}
\end{align}
where we define $0/0=1$.
In other words, there exists a randomized online algorithm that is $\CR_{\mathcal{D}}(R,A,f,n)$-competitive for any $D\in\mathcal{D}$, and this is best possible.
Similarly,
we define the competitive ratio of \((R,A,f,n)\) for \emph{known} distribution in $\mathcal{D}$ as
\begin{align}
\kCR_{\mathcal{D}}(R,A,f,n)=\sup_{D\in\mathcal{D}}\inf_{\ALG\in\RALGs}\frac{\mathbb{E}_{\bm{r}\sim D}[\OPT(\bm{r})]}{\mathbb{E}_{\bm{r}\sim D}[\ALG(\bm{r})]}. \label{eq:kCR}
\end{align}
Namely, for any $D\in\mathcal{D}$, there exists a $\CR_{\mathcal{D}}(R,A,f,n)$-competitive algorithm, and this is best possible.
We often omit the argument $(R,A,f,n)$ if there is no confusion.
\end{definition}
The value of the competitive ratio is at least \(1\) and smaller is better.
It should be noted that, for minimization problems,
a randomized online algorithm \(\ALG\) is \(\rho\)-competitive for a distribution $D$ over the request sequences $R^n$ if
\begin{align*}
\mathbb{E}_{\bm{r}\sim D}[\ALG(\bm{r})]\le \rho\cdot \mathbb{E}_{\bm{r}\sim D}[\OPT(\bm{r})].
\end{align*}
We can also define the competitive ratio of a minimization version of request-answer games in the same manner.

By the max-min inequality, the following inequality holds.
\begin{lemma}\label{lemma:max-min-ineq}
For any request-answer game \((R,A,f,n)\) and a set of distributions $\mathcal{D}$ over the request sequences $R^n$, we have
\begin{align*}
  \CR_{\mathcal{D}}(R,A,f,n)\ge \kCR_{\mathcal{D}}(R,A,f,n).
\end{align*}
\end{lemma}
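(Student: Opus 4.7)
The plan is to invoke the standard max-min inequality (weak LP duality / von Neumann-style weak inequality), viewing the competitive ratio as the value of a two-person game where the algorithm designer picks $\ALG\in\RALGs$ and the adversary picks $D\in\mathcal{D}$, with payoff
\[
h(\ALG,D)=\frac{\mathbb{E}_{\bm{r}\sim D}[\OPT(\bm{r})]}{\mathbb{E}_{\bm{r}\sim D}[\ALG(\bm{r})]}.
\]
Under this identification, $\CR_{\mathcal{D}}=\inf_{\ALG}\sup_D h(\ALG,D)$ and $\kCR_{\mathcal{D}}=\sup_D\inf_{\ALG} h(\ALG,D)$, so the claim is just the inf-sup $\ge$ sup-inf inequality.

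Concretely, I would fix arbitrary $\ALG_0\in\RALGs$ and $D_0\in\mathcal{D}$ and write the chain
\[
\sup_{D\in\mathcal{D}} h(\ALG_0,D)\;\ge\; h(\ALG_0,D_0)\;\ge\; \inf_{\ALG\in\RALGs} h(\ALG,D_0).
\]
The outer terms depend only on $\ALG_0$ (on the left) and $D_0$ (on the right), so I can take the infimum over $\ALG_0\in\RALGs$ on the left and the supremum over $D_0\in\mathcal{D}$ on the right to obtain $\CR_{\mathcal{D}}\ge \kCR_{\mathcal{D}}$.

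There is essentially no obstacle: the argument is purely formal and does not use any structural property of the game, the compactness of $\RALGs$ or $\mathcal{D}$, or the convention $0/0=1$ beyond the fact that $h$ is a well-defined extended-real-valued function on $\RALGs\times\mathcal{D}$. The interesting direction (when equality holds, via minimax theorems such as von Neumann's or Loomis's lemma) is discussed elsewhere in the paper; here only the easy direction is required.
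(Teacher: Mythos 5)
Your proof is correct and is exactly the argument the paper has in mind: the paper proves this lemma simply by citing ``the max-min inequality,'' i.e., the formal $\inf\sup \ge \sup\inf$ chain you spell out. Your version just makes the standard two-step sandwich explicit, and you are right that no compactness, convexity, or structural property of the game is needed.
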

Moreover, if $\mathcal{D}$ is a compact convex set, the equality in the above inequality holds.

\begin{lemma}\label{lemma:minimax}
For any request-answer game \((R,A,f,n)\) and
a compact convex set of distributions $\mathcal{D}$ over the request sequences $R^n$, we have
\begin{align*}
  \CR_{\mathcal{D}}(R,A,f,n)=\kCR_{\mathcal{D}}(R,A,f,n).
\end{align*}
\end{lemma}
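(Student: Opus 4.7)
The plan is to combine Lemma~\ref{lemma:max-min-ineq} with the reverse inequality $\CR_{\mathcal{D}} \le \kCR_{\mathcal{D}}$, which I would establish by linearizing the objective and invoking von Neumann's minimax theorem. The ratio $\mathbb{E}_D[\OPT]/\mathbb{E}_D[\ALG]$ appearing in~\eqref{eq:CR} and~\eqref{eq:kCR} is not jointly bilinear in $(\ALG,D)$, so the theorem cannot be applied directly; instead, for a fixed parameter $\rho>0$ I would work with the auxiliary quantity
\begin{align*}
F_\rho(\ALG, D) := \mathbb{E}_{\bm{r}\sim D}[\OPT(\bm{r})] - \rho \cdot \mathbb{E}_{\bm{r}\sim D}[\ALG(\bm{r})].
\end{align*}
Linearity of $F_\rho$ in $D$ is immediate from linearity of expectation, and linearity in $\ALG$ follows by writing a randomized algorithm as a convex combination $\sum_x p_x\ALG_x$ of deterministic ones, so that $\mathbb{E}_D[\ALG] = \sum_x p_x\,\mathbb{E}_D[\ALG_x]$.

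Next I would fix an arbitrary $\rho > \kCR_{\mathcal{D}}$ (the statement is vacuous otherwise). The sup--inf formula for $\kCR_{\mathcal{D}}$ then yields, for every $D \in \mathcal{D}$, some $\ALG_D \in \RALGs$ whose competitive ratio on $D$ is strictly less than $\rho$, which rearranges to $F_\rho(\ALG_D,D) \le 0$; the $0/0=1$ convention is harmless, since $\mathbb{E}_D[\OPT]=0$ automatically gives $F_\rho \le 0$ for every $\ALG$. Consequently $\sup_D\inf_\ALG F_\rho(\ALG,D) \le 0$. Because $\RALGs$ is a finite-dimensional compact convex set (as $\ALGs$ is finite), $\mathcal{D}$ is compact convex by hypothesis, and $F_\rho$ is bilinear and continuous, von Neumann's minimax theorem interchanges the order:
\begin{align*}
\inf_{\ALG \in \RALGs} \sup_{D \in \mathcal{D}} F_\rho(\ALG,D) \;=\; \sup_{D \in \mathcal{D}}\inf_{\ALG \in \RALGs} F_\rho(\ALG,D) \;\le\; 0.
\end{align*}
Compactness makes the outer infimum attained at some $\ALG^{*} \in \RALGs$, which then satisfies $\mathbb{E}_D[\OPT] \le \rho\cdot\mathbb{E}_D[\ALG^{*}]$ for every $D \in \mathcal{D}$; that is, $\ALG^{*}$ is $\rho$-competitive uniformly over $\mathcal{D}$, so $\CR_{\mathcal{D}} \le \rho$. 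Letting $\rho \downarrow \kCR_{\mathcal{D}}$ finishes the argument.

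The main obstacle is exactly the non-bilinearity of the ratio: one cannot feed~\eqref{eq:CR} into the minimax theorem verbatim, so the whole argument hinges on the parametrization $F_\rho$, which recasts ``being $\rho$-competitive'' as a single linear inequality and so fits the classical minimax setup. A secondary but easily dispatched concern is that the $0/0=1$ convention respects the equivalence between the ratio and $F_\rho$ formulations, handled by a one-line case split on whether $\mathbb{E}_D[\OPT]$ vanishes.
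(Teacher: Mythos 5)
Your proof is correct and follows essentially the same route as the paper: both linearize the ratio via a parametrized bilinear slack function ($F_\rho$ here, $h(D,\ALG)=\mathbb{E}_{\bm r\sim D}[\OPT(\bm r)-c\cdot\ALG(\bm r)]$ with $c=\CR_{\mathcal{D}}$ in the paper) and apply von Neumann's minimax theorem on the compact convex sets $\RALGs$ and $\mathcal{D}$. The only difference is cosmetic: the paper anchors the parameter at $c=\CR_{\mathcal{D}}$ and extracts a hard distribution $D^*$ showing $\kCR_{\mathcal{D}}\ge c$, whereas you anchor at $\rho>\kCR_{\mathcal{D}}$, extract a uniformly $\rho$-competitive algorithm $\ALG^*$, and let $\rho\downarrow\kCR_{\mathcal{D}}$.
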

\begin{proof}
  Since $\CR_{\mathcal{D}}(R,A,f,n)\ge \kCR_{\mathcal{D}}(R,A,f,n)$ holds by Lemma~\ref{lemma:max-min-ineq},
  it is sufficient to prove $\CR_{\mathcal{D}}(R,A,f,n)\le \kCR_{\mathcal{D}}(R,A,f,n)$.
  Let $c=\CR_{\mathcal{D}}(R,A,f,n)$ and let $h(D,\ALG)=\mathbb{E}_{\bm{r}\sim D}[\OPT(\bm{r})-c\cdot\ALG(\bm{r})]$.
  Since $c=\CR_{\mathcal{D}}(R,A,f,n)=\inf_{\ALG\in\RALGs}\sup_{D\in\mathcal{D}}\frac{\mathbb{E}_{\bm{r}\sim D}[\OPT(\bm{r})]}{\mathbb{E}_{\bm{r}\sim D}[\ALG(\bm{r})]}$, it holds that for any $\ALG\in\RALGs$ there exists $D\in\mathcal{D}$ such that $h(D,\ALG)\ge 0$. Thus we have
  \begin{align*}
    \inf_{\ALG\in\RALGs}\sup_{D\in\mathcal{D}}h(D,\ALG)\ge 0.
  \end{align*}
  Moreover, by von Neumann's minimax theorem~\cite{vonneumann1928}, we have  
  \begin{align*}
    \inf_{\ALG\in\RALGs}\sup_{D\in\mathcal{D}}h(D,\ALG)=\sup_{D\in\mathcal{D}}\inf_{\ALG\in\RALGs}h(D,\ALG)
  \end{align*}
  since $\RALGs$ and $\mathcal{D}$ are compact convex sets, and
  $h(D,\ALG)$ is linear by the linearity of expectations.
  Thus, $\sup_{D\in\mathcal{D}}\inf_{\ALG\in\RALGs}h(D,\ALG)\ge 0$ and hence
  there exists $D^*\in\mathcal{D}$ such that $h(D^*,\ALG)\ge 0$ for any $\ALG\in\RALGs$.
  Therefore, we obtain
  \begin{align*}
    \kCR_{\mathcal{D}}(R,A,f,n)
    &=\sup_{D\in\mathcal{D}}\inf_{\ALG\in\RALGs}\frac{\mathbb{E}_{\bm{r}\sim D}[\OPT(\bm{r})]}{\mathbb{E}_{\bm{r}\sim D}[\ALG(\bm{r})]}\\
    &\ge \inf_{\ALG\in\RALGs}\frac{\mathbb{E}_{\bm{r}\sim D^*}[\OPT(\bm{r})]}{\mathbb{E}_{\bm{r}\sim D^*}[\ALG(\bm{r})]}\\
    &\ge c=\CR_{\mathcal{D}}(R,A,f,n),
  \end{align*}  
  which proves the lemma.
\end{proof}

A distribution over a (finite) domain $\Omega$ is a function $D:\Omega\to[0,1]$ such that $\sum_{x\in\Omega}D(x)=1$.
We denote by $\Delta(\Omega)$ the set of all such distributions.
For $x\in\Omega$, let $\delta_{x}$ be a distribution such that $\delta_{x}(x)=1$ and $\delta_{x}(y)=0$ for any $x,y\in\Omega$ such that $y\ne x$.
In this paper, we consider the following six classes of distributions.
\begin{enumerate}
\item \emph{Dependent distribution}:
  $$\Delta_{\dep}(R^n)=\Delta(R^n).$$
\item \emph{Deterministic input}:
  $$\Delta_{\det}(R^n)=\{\delta_{\bm{r}}\mid \bm{r}\in R^n\}.$$
\item \emph{Independent distribution}:
  $$\Delta_{\ind}(R^n)=\left\{D\in\Delta(R^n)\,\middle|\,D(\bm{r})=\prod_{i=1}^n D_i(r_i),~D_i\in\Delta(R)\right\}.$$
\item \emph{Identical independent distribution}:
  $$\Delta_{\iid}(R^n)=\left\{D\in\Delta(R^n)\,\middle|\,D(\bm{r})=\prod_{i=1}^n D'(r_i),~D'\in\Delta(R)\right\}.$$
\item \emph{Random order of deterministic input}:
  $$\Delta_{\rd}(R^n)=\left\{\psi_{\bm{r}}\middle|\,\bm{r}\in R^n\right\}$$
  where
  $\psi_{\bm{r}}\in\Delta(R^n)$ $(\bm{r}\in R^n)$ is a distribution such that
  $\psi_{\bm{r}}(\bm{r}')=|\{\sigma\in\mathcal{S}_n\mid \bm{r}'=\bm{r}^{\sigma}\}|/n!$ for $\bm{r}'\in R^n$.
  Here, recall that $\bm{r}^{\sigma}=(r_{\sigma(1)},\dots,r_{\sigma(n)})$.
\item \emph{Random order of independent distribution}:
  $$\Delta_{\ri}(R^n)=\left\{D\in\Delta(R^n)\,\middle|\,D(\bm{r})=\sum_{\sigma\in\mathcal{S}_n}\prod_{i=1}^n D_{\sigma(i)}(r_i)/n!,~D_i\in\Delta(R)\right\}.$$
\end{enumerate}
We will omit the argument $R^n$ when no confusion can arise.
In addition, we abbreviate $\CR_{\Delta_{\dep}(R^n)}(R,A,f,n)$ to $\CR_{\dep}$
and also we abbreviate the other competitive ratios in the same way.

Now, we can formally state our main results, which was depicted in Figure~\ref{fig:relations}.
\begin{theorem}\label{thm:main}
For any request-answer game $(R,A,f,n)$, we have
\begin{align}
&\kCR_{\det}\le \kCR_{\iid}\le \CR_{\iid}\le \CR_{\rd}=\CR_{\ri}\le \kCR_{\dep}=\CR_{\det}=\CR_{\ind}=\CR_{\dep},\label{eq:main1}\\
&\kCR_{\det}\le \kCR_{\rd}\le \kCR_{\ri}\le \CR_{\ri},\label{eq:main2}\\
&\kCR_{\iid}\le \kCR_{\ri},\label{eq:main3}\\
&\kCR_{\iid}\le \kCR_{\ind}\le \kCR_{\dep}.\label{eq:main4}
\end{align}
\end{theorem}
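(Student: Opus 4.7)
The plan is to derive every relation in Theorem~\ref{thm:main} from four ingredients: a one-line monotonicity observation, a triviality that pins $\kCR_{\det}=1$, the convex-set minimax equality from Lemma~\ref{lemma:minimax}, and two extreme-point / mixture arguments that collapse $\Delta_{\dep}$ to $\Delta_{\det}$ and $\Delta_{\ri}$ to $\Delta_{\rd}$ respectively.

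First I would state and prove a monotonicity lemma: for any compact $\mathcal{D}_1\subseteq \mathcal{D}_2$, both $\CR_{\mathcal{D}_1}\le \CR_{\mathcal{D}_2}$ and $\kCR_{\mathcal{D}_1}\le \kCR_{\mathcal{D}_2}$ hold, by direct inspection of \eqref{eq:CR} and \eqref{eq:kCR}. Paired with the inclusions $\Delta_{\iid}\subseteq \Delta_{\ind}\subseteq \Delta_{\dep}$, $\Delta_{\iid}\subseteq \Delta_{\ri}$, $\Delta_{\rd}\subseteq \Delta_{\ri}$ and $\Delta_{\det}\subseteq \Delta_{\ind}$ (the last because $\delta_{\bm r}=\prod_i\delta_{r_i}$), this already supplies every $\le$-inequality in \eqref{eq:main1}--\eqref{eq:main4} except $\kCR_{\det}\le \kCR_{\iid}$ and $\kCR_{\det}\le \kCR_{\rd}$, the two instances $\kCR_{\iid}\le \CR_{\iid}$ and $\kCR_{\ri}\le \CR_{\ri}$ (both cases of Lemma~\ref{lemma:max-min-ineq}), the equality $\CR_{\rd}=\CR_{\ri}$, and the block $\CR_{\ri}\le \kCR_{\dep}=\CR_{\det}=\CR_{\ind}=\CR_{\dep}$. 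The two remaining $\kCR_{\det}$ inequalities are trivial once one notes that $\kCR_{\det}=1$: for each $D=\delta_{\bm r}\in \Delta_{\det}$ a distribution-aware algorithm can simply fix an offline-optimal answer sequence for $\bm r$ and return it regardless of the requests, giving ratio exactly $1$.

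For the block $\kCR_{\dep}=\CR_{\det}=\CR_{\ind}=\CR_{\dep}$ I would proceed as follows. The equality $\kCR_{\dep}=\CR_{\dep}$ is Lemma~\ref{lemma:minimax} applied to the compact convex set $\Delta_{\dep}=\Delta(R^n)$. For $\CR_{\dep}\le \CR_{\det}$, fix $\ALG\in\RALGs$ and set $c=\sup_{\bm r\in R^n}\OPT(\bm r)/\ALG(\bm r)$; then $\OPT(\bm r)\le c\cdot \ALG(\bm r)$ holds pointwise (the $0/0=1$ convention absorbs the degenerate case), so taking expectation against any $D\in \Delta_{\dep}$ gives $\mathbb{E}_D[\OPT]\le c\cdot \mathbb{E}_D[\ALG]$, whence infimum over $\ALG$ yields $\CR_{\dep}\le \CR_{\det}$. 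Monotonicity gives the reverse inequality, and $\CR_{\ind}$ is squeezed between the two via $\Delta_{\det}\subseteq \Delta_{\ind}\subseteq \Delta_{\dep}$.

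The last piece is $\CR_{\rd}=\CR_{\ri}$, which together with $\CR_{\rd}\le \CR_{\dep}=\kCR_{\dep}$ also yields $\CR_{\ri}\le \kCR_{\dep}$ and closes \eqref{eq:main1}. Monotonicity gives $\CR_{\rd}\le \CR_{\ri}$. For the reverse, I would verify by a change-of-variables that each $D\in \Delta_{\ri}$ with components $D_1,\dots,D_n$ admits the mixture representation
\[
D(\bm r)=\sum_{\bm y\in R^n}\Bigl(\prod_{i=1}^n D_i(y_i)\Bigr)\psi_{\bm y}(\bm r),
\]
so $D$ is a convex combination of elements of $\Delta_{\rd}$ with weights $\prod_i D_i(y_i)$. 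Hence any online algorithm that is $c$-competitive against every $\psi_{\bm y}\in \Delta_{\rd}$ is automatically $c$-competitive against $D$ by linearity of expectation in both the numerator and denominator of \eqref{eq:CR}, giving $\CR_{\ri}\le \CR_{\rd}$. The one slightly finicky point I anticipate is the bookkeeping for repeated entries of $\bm y$, where the permutation counts in $\psi_{\bm y}(\bm r)=|\{\sigma:\bm r=\bm y^{\sigma}\}|/n!$ and in the definition of $\Delta_{\ri}$ must be matched up carefully; apart from this, everything reduces to walking the arrows of Figure~\ref{fig:relations} with the ingredients above.
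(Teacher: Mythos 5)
Your proposal is correct and follows essentially the same route as the paper: the trivial $\kCR_{\det}=1$, monotonicity of both ratios under set inclusion, the minimax equality on the convex set $\Delta_{\dep}$, and mixture decompositions of $\Delta_{\dep}$ into point masses and of $\Delta_{\ri}$ into the $\psi_{\bm y}$'s with weights $\prod_i D_i(y_i)$, combined via the mediant inequality. The paper merely packages your two direct mixture arguments into a single general lemma ($\CR_{\mathcal{D}}=\CR_{\conv(\mathcal{D})}=\kCR_{\conv(\mathcal{D})}$) together with a chain of convex-hull containments, which is an organizational rather than a substantive difference.
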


\section{Examples of the Stochastic Input Models}\label{sec:example}
Before proving our main results, we see some examples of the stochastic input models.

\subsection{Online selection problem}\label{subsec:osp}
  Suppose that a decision-maker sequentially observes a sequence of random variables $X_1,X_2,\allowbreak\dots,X_n$
  and is allowed to choose only one number, which can be done only upon receiving that number.
  The goal is to maximize the expectation of the chosen value.
  We formalize the problem as a request-answer game.
  Let $R_m=\{0,1,\dots,m\}$ (where $m$ is a sufficiently large integer), $A=\{0,1\}$, and 
  \begin{align*}
    f(\bm{r},\bm{a})=
    \begin{cases}
      r_{i^*} & (\text{if }\{i^*\}=\{i\mid a_i=1\}),\\
      0 & (\text{otherwise}).
    \end{cases}
  \end{align*}
  $a_i=1$ (resp.\ $a_i=0$) represents that the decision-maker selects (resp.\ rejects) $i$th value.

  The problem in the known independent distributions model is introduced by Krengel and Sucheston,
  and it is eagerly studied under the name of \emph{the prophet inequalities}.
  It is well known that this problem is $2$-competitive when $n\ge 2$, i.e., $\limsup_{m\to\infty} \kCR_{\Delta_{\ind}(R_m^n)}=2$ for any $n\ge 2$~\cite{HK1992}.
  Hill and Kertz~\cite{HK1982} gave a better upper bound of the competitive ratio in the known i.i.d.\ model
  and the current best upper bound is $1.354$~\cite{AEEHKL2017}.

  For the known dependent distributions model,
  Hill and Kertz~\cite{HK1983} proved that there exists no constant competitive algorithm, i.e., $\limsup_{n,m\to\infty} \kCR_{\Delta_{\dep}(R_m^n)}=\infty$.
  By Theorem~\ref{thm:main}, this implies that
  \begin{align*}
    \limsup_{n,m\to\infty} \CR_{\Delta_{\dep}(R_m^n)}=\limsup_{n,m\to\infty} \CR_{\Delta_{\ind}(R_m^n)}=\limsup_{n,m\to\infty} \CR_{\Delta_{\det}(R_m^n)}=\infty.
  \end{align*}

  For random order of known independent distributions model (prophet secretary problem), Esfandiari et al.~\cite{EHLM2015} provided $e/(e-1)$-competitive algorithm
  and there is no online algorithm that can achieve a competitive ratio better than $15/11$.
  Hence, it holds that $15/11\le \limsup_{n,m\to\infty}\kCR_{\Delta_{\ri}(R_m^n)}\le e/(e-1)$.

  When the given input is random order or i.i.d.~distribution,
  it is easy to see that the secretary algorithm is $e$-competitive.
  On the other hand, Stewart~\cite{Stewart1978} proves that no algorithm can select the best number with probability $1/e-o(1)$
  when $X_1,X_2,\dots,X_n$ are chosen i.i.d.~from a uniform distribution on the interval $(\alpha,\beta)$
  and $(\alpha,\beta)$ is chosen from a certain Pareto distribution.
  Transforming the variables $X_i$ to $M^{X_i}$, where $M$ is a sufficient large number,
  implies that there is no online algorithm that is better than $e$-competitive.  
  Thus we have $\limsup_{n,m\to\infty}\CR_{\Delta_{\rd}(R_m^n)}=\limsup_{n,m\to\infty}\CR_{\Delta_{\ri}(R_m^n)}=\limsup_{n,m\to\infty}\CR_{\Delta_{\iid}(R_m^n)}=e$.

  For the other models, we can easily check that $\kCR_{\Delta_{\rd}(R_m^n)}=\kCR_{\Delta_{\det}(R_m^n)}=1$ for any $n,m$
  because we can deterministically pick the maximum number.

\subsection{Odds problem}\label{subsec:odds}
  We show that the \emph{odds problem}---a generalization of the classical secretary problem---can be formulated as the known identical distribution model.
  In the odds problem, a decision-maker sequentially observes a sequence of independent 0/1 random variables $X_1,X_2,\dots,X_n$,
  where $\Pr[X_i=1]=p_i(1)$ and $\Pr[X_i=0]=p_i(0)~(=1-p_i(1))$.
  We say \emph{success} if $X_i=1$ and \emph{failure} if $X_i=0$.
  The goal is to find an optimal stopping rule to maximize the probability of win---the probability of obtaining the last success.
  Note that the special case $p_i=1/i$ corresponds to the classical secretary problem
  and several generalizations of the odds problem have been studied~\cite{BP2000,Tamaki2010,MA2014,MA2016}.

  Let us consider a request-answer game $(R,A,f,n)$ where $R=\{0,1\}$, $A=\{0,1\}$, and
  \begin{align*}
    f(\bm{r},\bm{a})=
    \begin{cases}
      1 & (\text{if }\bm{r}=(0,\dots,0)\text{ or }\{i\mid a_i=1\}=\argmax\{i\mid r_i=1\}),\\
      0 & (\text{otherwise}).
    \end{cases}
  \end{align*}
  Then, for a distribution $D^*$ such that $D^*(\bm{r})=\prod_{i=1}^n p_i(r_i)$,
  the expected value $\mathbb{E}_{\bm{r}\sim D^*}[\ALG(\bm{r})]$ is the probability of win\footnote{More precisely, we add extra probability $\Pr[\bm{r}=(0,\dots,0)]$ to simplify later discussion.}.
  Note that $\OPT(\bm{r})=1$ for any $\bm{r}\in R^n$ and hence the competitive ratio is the inverse value of the probability of win.
  
  Bruss~\cite{bruss2000} proved that $\ALG^*=(g_1^*,\dots,g_n^*)$ is $e$-competitive when
  \begin{align*}
    g_i^*(r_1,\dots,r_i)=\begin{cases}
    1&(r_i=1\text{ and }\sum_{j=i+1}^n\frac{p_j(1)}{p_j(0)}<1),\\
    0&(\text{otherwise})
    \end{cases}
  \end{align*}
  and this is asymptotically best possible.
  Moreover, no online algorithm can achieve a competitive ratio better than $e$
  even if the input is a known i.i.d.\ distribution when $n$ goes to infinity (see Lemma~\ref{lemma:odds_iid} in Appendix).
  Namely, we can conclude that
  \begin{align*}
    \limsup_{n\to\infty}\kCR_{\ind}=\limsup_{n\to\infty}\kCR_{\iid}=e.
  \end{align*}

  Next, let us consider the following distribution $D^*$:
  \begin{align*}
    D^*(\bm{r})=\begin{cases}
    1/n &(\bm{r}=(\underbrace{1,\dots,1}_{i},\underbrace{0,\dots,0}_{n-i}),~i=1,2,\dots,n),\\
    0   &(\text{otherwise}).
    \end{cases}
  \end{align*}
  Then, it is easy to see that any algorithm wins with probability at most $1/n$.
  On the other hand, selecting a variable uniformly at random is an $n$-competitive algorithm.
  Thus, we obtain $\kCR_{\dep}=\CR_{\det}=\CR_{\ind}=\CR_{\dep}=n$.

  For known deterministic inputs or known random order distribution,
  it holds that $\kCR_{\det}=\kCR_{\rd}=1$
  because the algorithm can distinguish whether a current success is the last one or not.

  For the other models $\CR_{\iid}$, $\CR_{\rd}$, $\CR_{\ri}$, and $\kCR_{\ri}$,
  we can observe, by Theorem~\ref{thm:main}, that the values are at least $e$ when $n$ goes to infinity and they are at most $n$.
  However, any good bounds are not known, to the best of the author's knowledge.

\section{Relating the Competitive Ratios}\label{sec:main}
In this section, we give the proof of Theorem~\ref{thm:main} and
we show that the relations~\eqref{eq:main1}--\eqref{eq:main4} cannot be merged into one sequence.

\subsection{Proof of Theorem~\ref{thm:main}}
Let us start with an easy part.
\begin{lemma}
  For any request-answer game $(R,A,f,n)$, we have
  \begin{align*}
    \kCR_{\Delta_{\det}}(R^n)=1.
  \end{align*}
\end{lemma}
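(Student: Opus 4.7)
The plan is to exploit the fact that in the known deterministic input model, the algorithm is designed with full knowledge of $D=\delta_{\bm{r}_0}$, which pins down the entire request sequence in advance. So even though the algorithm is formally online (a sequence of functions $g_i\colon R^i\to A$), there is no genuine uncertainty, and the algorithm can just hard-code the optimal offline answer for $\bm{r}_0$.

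Concretely, I would argue as follows. Fix any $D\in\Delta_{\det}(R^n)$, so $D=\delta_{\bm{r}_0}$ for some $\bm{r}_0\in R^n$. Pick $\bm{a}^*\in\argmax\{f(\bm{r}_0,\bm{a})\mid \bm{a}\in A^n\}$, so that $f(\bm{r}_0,\bm{a}^*)=\OPT(\bm{r}_0)$. Define the deterministic online algorithm $\ALG=(g_1,\dots,g_n)$ by $g_i(r_1,\dots,r_i)=a^*_i$, i.e., the algorithm simply outputs the coordinates of $\bm{a}^*$ and ignores its actual inputs. Since $\bm{r}\sim D$ equals $\bm{r}_0$ almost surely, we have $\mathbb{E}_{\bm{r}\sim D}[\ALG(\bm{r})]=f(\bm{r}_0,\bm{a}^*)=\OPT(\bm{r}_0)=\mathbb{E}_{\bm{r}\sim D}[\OPT(\bm{r})]$. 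Hence
\begin{align*}
\inf_{\ALG\in\RALGs}\frac{\mathbb{E}_{\bm{r}\sim D}[\OPT(\bm{r})]}{\mathbb{E}_{\bm{r}\sim D}[\ALG(\bm{r})]}\le 1,
\end{align*}
with the convention $0/0=1$ handling the case $\OPT(\bm{r}_0)=0$. Since the optimum dominates any algorithm's utility on every input, this infimum is also at least $1$, so it equals $1$. Taking the supremum over $D\in\Delta_{\det}(R^n)$ yields $\kCR_{\det}=1$.

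There is essentially no obstacle here; the entire content of the lemma is the observation that ``known'' plus ``deterministic'' collapses the online model to the offline model. The only minor point worth being explicit about is the $0/0=1$ convention for trivial instances, which is already stipulated in the definition of $\CR$.
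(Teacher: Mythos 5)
Your proof is correct and follows essentially the same approach as the paper: both arguments observe that knowing $D=\delta_{\bm{r}_0}$ lets the algorithm hard-code the optimal answer sequence for $\bm{r}_0$, so the ratio collapses to $1$. Your version is just slightly more explicit about the construction and the $0/0=1$ convention.
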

\begin{proof}
  As the online algorithm knows the whole request sequence in advance, it can output the optimal answer sequence.
  More precisely, we get
  \begin{align*}
    \kCR_{\det}
    &=\sup_{\delta_{\bm{r}}\in\Delta_{\det}(R^n)}\inf_{\ALG\in\RALGs}\frac{\mathbb{E}_{\bm{r}'\sim \delta_{\bm{r}}}[\OPT(\bm{r}')]}{\mathbb{E}_{\bm{r}'\sim \delta_{\bm{r}}}[\ALG(\bm{r}')]}\\
    &=\sup_{\bm{r}\in R^n}\inf_{\ALG\in\RALGs}\frac{\OPT(\bm{r})}{\ALG(\bm{r})}\\
    &=\sup_{\bm{r}\in R^n}\frac{\max_{\bm{a}\in A^n} f(\bm{r},\bm{a})}{\max_{\ALG\in\RALGs}\ALG(\bm{r})}\\
    &=\sup_{\bm{r}\in R^n}\frac{\max_{\bm{a}\in A^n} f(\bm{r},\bm{a})}{\max_{\bm{a}\in A^n} f(\bm{r},\bm{a})}=1.\qedhere
  \end{align*}
\end{proof}
As the competitive ratios are at least $1$,
the lemma implies that $\kCR_{\det}$ is the smallest one.

We use the convex hull of a set of distributions to prove the rest part of the theorem.
For a set of distributions $\mathcal{D}$, the convex hull is the set
\begin{align*}
\conv(\mathcal{D})=\left\{\sum_{D\in\mathcal{D}}\lambda_D\cdot D \,\middle|\, \sum_{D\in\mathcal{D}}\lambda_D=1\text{ and }\lambda_D\ge 0~(\forall D\in\mathcal{D})\right\}.
\end{align*}

The following lemma connects the unknown distribution model and the known distribution model.
\begin{lemma}\label{lemma:unknown_known}
  For any compact set of distributions $\mathcal{D}$, we have
  \begin{align*}
    \CR_{\mathcal{D}}=\CR_{\conv(\mathcal{D})}=\kCR_{\conv(\mathcal{D})}.
  \end{align*}
\end{lemma}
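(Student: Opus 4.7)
The plan is to split the claim into two equalities and attack them separately. The second equality, $\CR_{\conv(\mathcal{D})}=\kCR_{\conv(\mathcal{D})}$, should fall out of Lemma~\ref{lemma:minimax} once we check that $\conv(\mathcal{D})$ is a compact convex subset of the distributions over $R^n$. Convexity is built into the definition. For compactness, I would use that $R^n$ is finite (since $R$ is finite by assumption), so $\Delta(R^n)$ sits inside a finite-dimensional simplex; therefore the convex hull of a compact set $\mathcal{D}$ is again compact (Carath\'eodory's theorem combined with compactness of the parameter simplex). Then Lemma~\ref{lemma:minimax} applies directly.

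For the first equality, $\CR_{\mathcal{D}}=\CR_{\conv(\mathcal{D})}$, the inequality $\CR_{\mathcal{D}}\le\CR_{\conv(\mathcal{D})}$ is immediate from $\mathcal{D}\subseteq\conv(\mathcal{D})$, since the supremum is taken over a larger set. The nontrivial direction is $\CR_{\conv(\mathcal{D})}\le\CR_{\mathcal{D}}$. For this, I would fix a randomized algorithm $\ALG\in\RALGs$ and show
\begin{align*}
\sup_{D\in\conv(\mathcal{D})}\frac{\mathbb{E}_{\bm{r}\sim D}[\OPT(\bm{r})]}{\mathbb{E}_{\bm{r}\sim D}[\ALG(\bm{r})]}\le \sup_{D\in\mathcal{D}}\frac{\mathbb{E}_{\bm{r}\sim D}[\OPT(\bm{r})]}{\mathbb{E}_{\bm{r}\sim D}[\ALG(\bm{r})]}.
\end{align*}
Writing any $D\in\conv(\mathcal{D})$ as a finite convex combination $D=\sum_i \lambda_i D_i$ with $D_i\in\mathcal{D}$ (using Carath\'eodory), both $\mathbb{E}_{\bm{r}\sim D}[\OPT(\bm{r})]$ and $\mathbb{E}_{\bm{r}\sim D}[\ALG(\bm{r})]$ are the corresponding $\lambda_i$-weighted sums, so the ratio is a mediant of the ratios on the $D_i$. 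The elementary mediant inequality then bounds it by the maximum of the $D_i$-ratios, giving the required bound. Taking $\inf_{\ALG}$ on both sides yields $\CR_{\conv(\mathcal{D})}\le\CR_{\mathcal{D}}$.

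The main technical nuisance I anticipate is handling the degenerate case $\mathbb{E}_{\bm{r}\sim D_i}[\ALG(\bm{r})]=0$ in the mediant step, because of the paper's convention $0/0=1$. I would dispatch it by separating the indices $i$ with $\mathbb{E}_{D_i}[\ALG]=0$: on those, either $\mathbb{E}_{D_i}[\OPT]=0$ too (so they contribute nothing to numerator or denominator and can be dropped) or $\mathbb{E}_{D_i}[\OPT]>0$ (so $\sup_{D\in\mathcal{D}}$ of the ratio is $+\infty$ and the inequality is trivial). On the remaining indices, the standard mediant inequality $\frac{\sum_i \lambda_i a_i}{\sum_i \lambda_i b_i}\le \max_i \frac{a_i}{b_i}$ with $b_i>0$ applies. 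Apart from that edge case, the argument is just combining the minimax lemma with Carath\'eodory and the mediant inequality.
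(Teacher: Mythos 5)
Your proposal is correct and takes essentially the same route as the paper: it splits off $\CR_{\conv(\mathcal{D})}=\kCR_{\conv(\mathcal{D})}$ via Lemma~\ref{lemma:minimax}, gets $\CR_{\mathcal{D}}\le\CR_{\conv(\mathcal{D})}$ from $\mathcal{D}\subseteq\conv(\mathcal{D})$, and proves the reverse by fixing an algorithm, expanding each $D\in\conv(\mathcal{D})$ as a convex combination of elements of $\mathcal{D}$, and applying the mediant inequality before taking the infimum over algorithms. You are in fact slightly more careful than the paper, which does not spell out the compactness of $\conv(\mathcal{D})$ or the $0/0$ edge case in the mediant step.
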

\begin{proof}
  $\CR_{\conv(\mathcal{D})}=\kCR_{\conv(\mathcal{D})}$ holds by Lemma~\ref{lemma:minimax}
  since $\conv(\mathcal{D})$ is a compact convex set.

  Since $\mathcal{D}\subseteq \conv(\mathcal{D})$, we have
  \begin{align}
    \inf_{\ALG\in\RALGs}\sup_{D\in\mathcal{D}}\frac{\mathbb{E}_{\bm{r}\sim D}[\OPT(\bm{r})]}{\mathbb{E}_{\bm{r}\sim D}[\ALG(\bm{r})]}\le
    \inf_{\ALG\in\RALGs}\sup_{D\in\conv(\mathcal{D})}\frac{\mathbb{E}_{\bm{r}\sim D}[\OPT(\bm{r})]}{\mathbb{E}_{\bm{r}\sim D}[\ALG(\bm{r})]}. \label{eq:conveq}
  \end{align}
  Thus, it is sufficient to prove the reverse inequality of~\eqref{eq:conveq}.
 
  Let us fix $\ALG\in\RALGs$.
  Let $D^1,D^2,\dots\in\conv(\mathcal{D})$ be a sequence of distributions satisfying
  \begin{align*}
    \lim_{i\to\infty}\frac{\mathbb{E}_{\bm{r}\sim D^i}[\OPT(\bm{r})]}{\mathbb{E}_{\bm{r}\sim D^i}[\ALG(\bm{r})]}
    =\sup_{D\in\conv(\mathcal{D})}\frac{\mathbb{E}_{\bm{r}\sim D}[\OPT(\bm{r})]}{\mathbb{E}_{\bm{r}\sim D}[\ALG(\bm{r})]}.
  \end{align*}
  By the definition of the convex closure, for each $i\in\{1,2,\dots\}$,there exist coefficients $\lambda^i_{D}$ such that
  $D^i = \sum_{D\in\mathcal{D}} \lambda^i_{D}\cdot D$,
  $\sum_{D\in\mathcal{D}}\lambda^i_{D}=1$, and
  $\lambda^i_{D}\ge 0$ $(\forall D\in\mathcal{D})$.
  Thus, we have
  \begin{align*}
    \frac{\mathbb{E}_{\bm{r}\sim D^i}[\OPT(\bm{r})]}{\mathbb{E}_{\bm{r}\sim D^i}[\ALG(\bm{r})]}
    &=\frac{\sum_{\bm{r}\in R^n} D^i(\bm{r})\cdot \OPT(\bm{r})}{\sum_{\bm{r}\in R^n} D^i(\bm{r})\cdot \ALG(\bm{r})}\\
    &=\frac{\sum_{\bm{r}\in R^n} \sum_{D\in\mathcal{D}} \lambda^i_{D}\cdot D(\bm{r})\cdot \OPT(\bm{r})}{\sum_{\bm{r}\in R^n} \sum_{D\in\mathcal{D}} \lambda^i_{D}\cdot D(\bm{r})\cdot \ALG(\bm{r})}\\
    &=\frac{\sum_{D\in\mathcal{D}} \lambda^i_{D}\cdot \mathbb{E}_{\bm{r}\sim D}[\OPT(\bm{r})]}{\sum_{D\in\mathcal{D}} \lambda^i_{D}\cdot \mathbb{E}_{\bm{r}\sim D}[\ALG(\bm{r})]}\\
    &\le \sup_{D\in\mathcal{D}}\frac{\mathbb{E}_{\bm{r}\sim D}[\OPT(\bm{r})]}{\mathbb{E}_{\bm{r}\sim D}[\ALG(\bm{r})]}.
  \end{align*}
  Hence we obtain
  \begin{align*}
    \sup_{D\in\conv(\mathcal{D})}\frac{\mathbb{E}_{\bm{r}\sim D}[\OPT(\bm{r})]}{\mathbb{E}_{\bm{r}\sim D}[\ALG(\bm{r})]}
    =\lim_{i\to\infty}\frac{\mathbb{E}_{\bm{r}\sim D^i}[\OPT(\bm{r})]}{\mathbb{E}_{\bm{r}\sim D^i}[\ALG(\bm{r})]}
    \le \sup_{D\in\mathcal{D}}\frac{\mathbb{E}_{\bm{r}\sim D}[\OPT(\bm{r})]}{\mathbb{E}_{\bm{r}\sim D}[\ALG(\bm{r})]}.
  \end{align*}
  By taking infimum over $\ALG\in\RALGs$, we have
  \begin{align*}
    \inf_{\ALG\in\RALGs}\sup_{D\in\conv(\mathcal{D})}\frac{\mathbb{E}_{\bm{r}\sim D}[\OPT(\bm{r})]}{\mathbb{E}_{\bm{r}\sim D}[\ALG(\bm{r})]}
    \le \inf_{\ALG\in\RALGs}\sup_{D\in\mathcal{D}}\frac{\mathbb{E}_{\bm{r}\sim D}[\OPT(\bm{r})]}{\mathbb{E}_{\bm{r}\sim D}[\ALG(\bm{r})]},
  \end{align*}
  which proves the lemma.  
\end{proof}

Next, we examine relationships between our intended sets of distributions.
\begin{lemma}\label{lemma:conv_rel}
For any request-answer game $(R,A,f,n)$, we have
\begin{align*}
    \conv(\Delta_{\iid})\subseteq \conv(\Delta_{\rd})=\conv(\Delta_{\ri}) \subseteq \conv(\Delta_{\det})=\conv(\Delta_{\ind})=\Delta_{\dep}=\conv(\Delta_{\dep}).
\end{align*}

\end{lemma}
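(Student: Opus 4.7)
The plan is to verify the chain inclusion by inclusion, unfolding the defining formulas and exhibiting explicit convex decompositions. The easy cases reduce to observing that a distribution lies in the convex hull of its point masses, and the interesting cases all come down to a single combinatorial identity obtained by regrouping pairs $(\bm{r}', \sigma)$ with $\bm{r}=(\bm{r}')^{\sigma}$.

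First I will dispose of the rightmost block. Since $\Delta_{\dep}=\Delta(R^n)$ is convex, $\conv(\Delta_{\dep})=\Delta_{\dep}$. Every $D\in\Delta_{\dep}$ writes as $D=\sum_{\bm{r}\in R^n}D(\bm{r})\,\delta_{\bm{r}}$, giving $\Delta_{\dep}\subseteq\conv(\Delta_{\det})$; the reverse inclusion is immediate. For $\conv(\Delta_{\det})=\conv(\Delta_{\ind})$, the inclusion $\Delta_{\det}\subseteq\Delta_{\ind}$ follows by taking $D_i=\delta_{r_i}$, while any $D\in\Delta_{\ind}$ with $D(\bm{r})=\prod_i D_i(r_i)$ can be written as $\sum_{\bm{r}'}\bigl(\prod_i D_i(r'_i)\bigr)\,\delta_{\bm{r}'}\in\conv(\Delta_{\det})$. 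The inclusion $\conv(\Delta_{\rd})\subseteq\conv(\Delta_{\det})$ is immediate from the definition of $\psi_{\bm{r}}$ as a uniform average over permutations of $\bm{r}$.

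Next I will handle the leftmost inclusion $\conv(\Delta_{\iid})\subseteq\conv(\Delta_{\rd})$. Given an i.i.d.\ distribution $D(\bm{r})=\prod_i D'(r_i)$, the key observation is that the weight $\prod_i D'(s_i)$ is invariant under permutations of $\bm{s}$. I will show the identity
\begin{align*}
D(\bm{r})=\sum_{\bm{s}\in R^n}\Bigl(\prod_{i=1}^n D'(s_i)\Bigr)\,\psi_{\bm{s}}(\bm{r}),
\end{align*}
by expanding $\psi_{\bm{s}}(\bm{r})=|\{\sigma:\bm{r}=\bm{s}^{\sigma}\}|/n!$, changing the summation variable from $\bm{s}$ to $\sigma$ using $s_j=r_{\sigma^{-1}(j)}$, and exploiting symmetry to reduce to $n!$ identical copies of $\prod_i D'(r_i)$. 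This exhibits $D$ as a convex combination of elements of $\Delta_{\rd}$.

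The central part, and the one I expect to be the main obstacle, is the equality $\conv(\Delta_{\rd})=\conv(\Delta_{\ri})$. For $\subseteq$, I will check that $\Delta_{\rd}\subseteq\Delta_{\ri}$: picking $D_i=\delta_{r_i}$ in the definition of a random-order independent distribution recovers exactly $\psi_{\bm{r}}$. For the harder direction $\supseteq$, I will show that every $D\in\Delta_{\ri}$ with $D(\bm{r})=\sum_{\sigma}\prod_i D_{\sigma(i)}(r_i)/n!$ admits the decomposition
\begin{align*}
D=\sum_{\bm{r}'\in R^n}\Bigl(\prod_{i=1}^n D_i(r'_i)\Bigr)\,\psi_{\bm{r}'},
\end{align*}
which exhibits $D$ as a convex combination of $\psi_{\bm{r}'}$'s (the coefficients are nonnegative and sum to $1$). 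To verify this, I will substitute $\psi_{\bm{r}'}(\bm{r})=|\{\sigma:\bm{r}=(\bm{r}')^{\sigma}\}|/n!$, swap the order of summation so as to range over $\sigma$, and use the relabeling $j=\sigma(i)$ to transform $\prod_j D_j(r_{\sigma^{-1}(j)})$ into $\prod_i D_{\sigma(i)}(r_i)$, recovering the definition of $D(\bm{r})$. The only delicate point is keeping the bookkeeping of $\sigma$ versus $\sigma^{-1}$ straight; once that identity is established, the remaining inclusions chain together to finish the proof.
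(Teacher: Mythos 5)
Your proposal is correct and follows essentially the same route as the paper: the rightmost block via point-mass decompositions, $\Delta_{\ri}\subseteq\conv(\Delta_{\rd})$ via the decomposition $D=\sum_{\bm{r}'}\bigl(\prod_i D_i(r'_i)\bigr)\psi_{\bm{r}'}$, and the i.i.d.\ inclusion via permutation invariance of the weights. The permutation bookkeeping you flag does check out (for fixed $\sigma$ there is a unique $\bm{r}'$ with $\bm{r}=(\bm{r}')^{\sigma}$, and re-indexing over $\sigma^{-1}$ recovers the defining sum), so no gap remains.
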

\begin{proof}
  We first prove that $\conv(\Delta_{\det})=\conv(\Delta_{\ind})=\conv(\Delta_{\dep})=\Delta_{\dep}$.
  By the definition, we have
  \begin{align*}
    \Delta_{\det}\subseteq \Delta_{\ind}\subseteq \Delta_{\dep}.
  \end{align*}
  Also, for $D\in\Delta_{\dep}$, it holds that $D=\sum_{\bm{r}\in R^n}D(\bm{r})\delta_{\bm{r}}$
  and hence $\Delta_{\dep}\subseteq \conv(\Delta_{\det})$.
  Thus, $\Delta_{\dep}=\conv(\Delta_{\dep})=\conv(\Delta_{\ind})=\conv(\Delta_{\det})$ holds.

  Next, we observe that $\conv(\Delta_{\rd}(R^n)) \subseteq \Delta_{\dep}(R^n)$.
  As $\Delta_{\dep}(R^n)=\conv(\Delta_{\dep}(R^n))$, it is sufficient to prove that $\Delta_{\rd}(R^n)\subseteq \Delta_{\dep}(R^n)$.
  Let $\psi_{\bm{r}}\in\Delta_{\rd}(R^n)$. Recall that $\psi_{\bm{r}}(\bm{r}')=|\{\sigma\in\mathcal{S}_n\mid \bm{r}'=\bm{r}^{\sigma}\}|/n!$ for $\bm{r}'\in R^n$.
  Thus we obtain
  \begin{align*}
    \psi_{\bm{r}}=\sum_{\bm{r}'\in R^n}\frac{|\{\sigma\in\mathcal{S}_n\mid \bm{r}'=\bm{r}^{\sigma}\}|}{n!}\cdot \delta_{\bm{r}'}\in \Delta_{\dep}(R^n).
  \end{align*}

  Furthermore, we show $\conv(\Delta_{\rd})=\conv(\Delta_{\ri})$.
  Since $\Delta_{\rd}\subseteq \Delta_{\ri}$, it is sufficient to prove that $\Delta_{\ri}\subseteq \conv(\Delta_{\rd})$.
  Let $D\in\Delta_{\ri}$ where $D(\bm{r})=\sum_{\sigma\in \mathcal{S}_n}\prod_{i=1}^n D_{\sigma(i)}(r_i)/n!$ and $D_i\in\Delta(R)$.
  Also, let $D'\in\Delta_{\ind}$ be a distribution such that $D'(\bm{r})=\prod_{i=1}^n D_i(r_i)$.
  Then we have
  \begin{align*}
    D&=\sum_{\bm{r}'\in R^n} D'(\bm{r}')\psi_{\bm{r}'}\in \conv(\Delta_{\rd}).
  \end{align*}
  
  Finally, we show that $\Delta_{\iid}(R^n)\subseteq \conv(\Delta_{\rd}(R^n))$.
  Let $D\in\Delta_{\iid}(R^n)$ and $D(\bm{r})=\prod_{i=1}^n D'(r_i)$.
  Then we have $D(\bm{r})=\prod_{i=1}^n D'(r_{i})=\prod_{i=1}^n D'(r_{\sigma(i)})=D(\bm{r}^{\sigma})$ for any $\sigma\in\mathcal{S}_n$
  and hence
  \begin{align*}
    D
    &=\sum_{\bm{r}\in R^n} D(\bm{r})\delta_{\bm{r}}
    =\sum_{\sigma\in\mathcal{S}_n}\frac{1}{n!}\sum_{\bm{r}\in R^n} D(\bm{r}^{\sigma})\delta_{\bm{r}^{\sigma}}\\
    &=\sum_{\bm{r}\in R^n}\sum_{\sigma\in\mathcal{S}_n} \frac{D(\bm{r})\delta_{\bm{r}^{\sigma}}}{n!}
    =\sum_{\bm{r}\in R^n} D(\bm{r})\sum_{\sigma\in\mathcal{S}_n} \frac{\delta_{\bm{r}^{\sigma}}}{n!}\\
    &=\sum_{\bm{r}\in R^n} D(\bm{r})\left(\sum_{\bm{r}'\in R^n}\frac{|\{\sigma\in\mathcal{S}_n\mid \bm{r}'=\bm{r}^{\sigma}\}|}{n!}\cdot \delta_{\bm{r}'}\right)\\
    &=\sum_{\bm{r}\in R^n} D(\bm{r})\cdot \psi_{\bm{r}}
    \in\conv(\Delta_{\rd}(R^n)),
  \end{align*}
  which proves the claim.
\end{proof}
We remark that $\Delta_{\iid}(R^n)\subsetneq \conv(\Delta_{\iid}(R^n))$ in general.
For example, if $R=\{0,1\}$ and $n=2$, then
$\frac{1}{2}\delta_{(0,0)}+\frac{1}{2}\delta_{(1,1)}\in \conv(\Delta_{\iid}(R^n))$
while $\frac{1}{2}\delta_{(0,0)}+\frac{1}{2}\delta_{(1,1)}\not\in \Delta_{\iid}(R^n)$.

Now we are ready to prove Theorem~\ref{thm:main}.
By Lemmas~\ref{lemma:unknown_known} and~\ref{lemma:conv_rel}, we obtain
\begin{align*}
\CR_{\iid}\le \CR_{\rd}=\CR_{\ri}\le \kCR_{\dep}=\CR_{\det}=\CR_{\ind}=\CR_{\dep}
\end{align*}
since $\mathcal{D}\subseteq\mathcal{D}'$ implies $\CR_{\mathcal{D}}\le\CR_{\mathcal{D}'}$.
Moreover, $\kCR_{\iid}\le \CR_{\iid}$ holds by Lemma~\ref{lemma:max-min-ineq}
and $\kCR_{\iid}\ge 1~(=\kCR_{\det})$ by the definition of competitive ratio.
Thus, \eqref{eq:main1} holds.
Also, \eqref{eq:main2} holds by $\Delta_{\rd}\subseteq \Delta_{\ri}$,
\eqref{eq:main3} holds by $\Delta_{\iid}\subseteq \Delta_{\ri}$,
and~\eqref{eq:main4} holds by $\Delta_{\iid}\subseteq \Delta_{\ind}\subseteq \Delta_{\dep}$.

\subsection{Inequalities for incomparable ratios}
We show that the relations~\eqref{eq:main1}--\eqref{eq:main4} cannot be merged into one sequence.
To prove this, we observe some examples of request-answer games.
\begin{lemma}
  For each following statement, there exists a request-answer game $(R,A,f,n)$ that satisfies it:
  $(a)$ $\kCR_{\ri}<\CR_{\iid}$,
  $(b)$ $\kCR_{\ind}<\CR_{\iid}$,
  $(c)$ $\kCR_{\rd}<\kCR_{\iid}$,
  $(d)$ $\CR_{\iid}<\kCR_{\rd}$, and
  $(e)$ $\CR_{\rd}<\kCR_{\ind}$.
\end{lemma}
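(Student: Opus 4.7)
My plan is to furnish, for each of the five statements, an explicit request-answer game and to verify the asserted strict inequality on it. For (a), (b), (c) the examples from Section~\ref{sec:example} suffice; for (d) and (e) I will build a tailored three-round game.

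For (a) and (b), the online selection problem of Section~\ref{subsec:osp} supplies both separations on a single family of instances: $\kCR_{\ri}\le e/(e-1)$ (prophet secretary of Esfandiari et al.), $\kCR_{\ind}\le 2$ (prophet inequality of Krengel--Sucheston), and $\CR_{\iid}\to e$ as $n,m\to\infty$ (Stewart's transformation of the secretary lower bound). Since $e/(e-1)<e$ and $2<e$, (a) and (b) follow for sufficiently large $n$ and $m$. For (c), the odds problem of Section~\ref{subsec:odds} gives $\kCR_{\rd}=1$ (once the multi-set is fixed, the algorithm can count down the remaining successes and declare the last one) while $\kCR_{\iid}\to e$ (Lemma~\ref{lemma:odds_iid}), so $\kCR_{\rd}<\kCR_{\iid}$.

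For (d) and (e), I would consider the three-round game on $R=A=\{0,1\}$ with utility
\[
  f(\bm{r},\bm{a})=\mathbf{1}[\bm{r}=(0,1,0),\,a_1=1]+\mathbf{1}[\bm{r}=(0,0,1),\,a_1=0]+\mathbf{1}[\bm{r}\in\{(1,0,0),(1,1,1),(0,0,0)\}].
\]
The first two terms demand incompatible values of $a_1$ when $r_1=0$, so with the multi-set $\{0,0,1\}$ (known to the algorithm in $\kCR_{\rd}$) the two conflicting permutations split whichever randomization is chosen; combined with the unconditional reward on $(1,0,0)$ this forces $\mathbb{E}[\ALG]=2/3$ against $\mathbb{E}[\OPT]=1$, whence $\kCR_{\rd}=3/2$. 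In $\CR_{\iid}$, by contrast, the symmetry $\Pr[\bm{r}=(0,1,0)]=\Pr[\bm{r}=(0,0,1)]$ under any $\mathrm{Ber}(p)^3$ makes $\mathbb{E}[\ALG]$ independent of the algorithm's strategy, and the ratio reduces to $(1-3p^2+3p^3)/(1-p-p^2+2p^3)$; the identity $3(1-p-p^2+2p^3)-2(1-3p^2+3p^3)=3p^2-3p+1=3(p-\tfrac{1}{2})^2+\tfrac{1}{4}$ is strictly positive, so this ratio stays strictly below $3/2$ for every $p$, establishing (d). For (e), the same game has $\CR_{\rd}=3/2$ (the multi-set $\{0,0,1\}$ is the worst even for an algorithm blind to it), whereas an adversary in $\kCR_{\ind}$ with $p_1=0$ and $p_2=p_3=p\to 1$ makes $\mathbb{E}[\OPT]/\mathbb{E}[\ALG]=1+p$ approach $2$, so $\kCR_{\ind}=2>3/2=\CR_{\rd}$.

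The main obstacle is the strict bound $\CR_{\iid}<3/2$ in (d): I must verify that $g_1(1)$ is irrelevant (because events depending on $r_1=1$ are all unconditional) and that $g_1(0)$ only shuffles probability between the two symmetric conditional sequences, after which the inequality $3p^2-3p+1>0$ on $[0,1]$ completes the argument. The remaining four separations reduce to invoking competitive-ratio bounds already proved in Section~\ref{sec:example} or standard in the literature.
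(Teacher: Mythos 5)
Your proof is correct, and parts $(a)$--$(c)$ coincide with the paper's argument (online selection problem for $(a)$,$(b)$; odds problem for $(c)$). For $(d)$ and $(e)$ you take a genuinely different route. The paper also uses ``predict whether $\bm{r}\in S$'' games, but it uses two separate asymptotic families: for $(d)$, $R=[n]$ with $S$ the permutations whose last two entries increase, so that an i.i.d.\ sequence almost surely fails to be a permutation and $\CR_{\iid}=1+o(1)$ while $\kCR_{\rd}=2$; for $(e)$, $R=\{0,1\}$ with $S=\{(0,\dots,0,1)\}$, giving $\CR_{\rd}=n/(n-1)$ versus $\kCR_{\ind}=2$. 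You instead exhibit a single fixed three-round game that witnesses both separations at once, with exact finite computations ($\kCR_{\rd}=\CR_{\rd}=3/2$, $\kCR_{\ind}=2$, and $\CR_{\iid}<3/2$ via the identity $3(1-p-p^2+2p^3)-2(1-3p^2+3p^3)=3(p-\tfrac12)^2+\tfrac14>0$). I checked the arithmetic: $\mathbb{E}[\OPT]=1-3p^2+3p^3$ and $\mathbb{E}[\ALG]=1-p-p^2+2p^3$ are right, the denominator is bounded away from zero on $[0,1]$ so the supremum of the ratio is attained and strictly below $3/2$, and the algorithm-independence of $\mathbb{E}[\ALG]$ holds because the only reward terms involving $\bm{a}$ are the two equiprobable conflicting sequences starting with $r_1=0$. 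What your approach buys is concreteness and economy---no ``with high probability'' asymptotics, one instance for two claims, and as a bonus it shows $(d)$ and $(e)$ can hold simultaneously; what the paper's buys is larger (and in $(d)$ asymptotically extremal) gaps. One tiny point worth making explicit if you write this up: the supremum defining $\CR_{\iid}$ is over the compact interval $p\in[0,1]$ and the ratio is continuous there, which is why pointwise strictness upgrades to $\sup_p<3/2$.
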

\begin{proof}
\noindent$\boldsymbol{(a)}$ $\kCR_{\ri}<\CR_{\iid}$.
This inequality holds for an online selection problem instance since $\kCR_{\ri}\le e/(e-1)~(\approx 1.582)$ and $\CR_{\iid}\ge e~(\approx 2.718)$ when $n$ and $m$  go to infinity (see Section~\ref{subsec:osp}).

\medskip
\noindent$\boldsymbol{(b)}$ $\kCR_{\ind}<\CR_{\iid}$.
This inequality holds for an online selection problem instance since $\kCR_{\ind}\le 2$ and $\CR_{\iid}\ge e$ when $n$ and $m$ go to infinity (see Section~\ref{subsec:osp}).

\medskip
\noindent$\boldsymbol{(c)}$ $\kCR_{\rd}<\kCR_{\iid}$.
This inequality holds for an odds problem instance since $\kCR_{\rd}\to 1$ and $\kCR_{\iid}\to e$ as $n$ goes to infinity (see Section~\ref{subsec:odds}).\\

To prove the other cases, we consider request-answer games of the form $G=(R,A,f_S,n)$
where $S\subseteq R^n$, $A=\{0,1\}$, and
\begin{align*}
  f_S(\bm{r},\bm{a})=
  \begin{cases}
    1 &((\bm{r}\in S\text{ and }a_1=1)\text{ or }(\bm{r}\not\in S\text{ and }a_1=0)),\\
    0 &(\text{otherwise}).
  \end{cases}
\end{align*}
By the definition of the game, the task of the online algorithm is to predict $\bm{r}\in S$ or not.
The prediction is answered as $a_1$.
The profit is one if the prediction is correct and is zero otherwise.
Here, we remark that $\OPT(\bm{r})=1$ for any $\bm{r}\in R^n$.

\medskip
\noindent$\boldsymbol{(d)}$ $\CR_{\iid}<\kCR_{\rd}$.
We consider the game $G$ with
\begin{align*}
  R=\{1,2,\dots,n\} \quad\text{and}\quad S=\{(\sigma(1),\dots,\sigma(n))\in R^n\mid \sigma\in\mathcal{S}_n,~\sigma(n)>\sigma(n-1)\}.
\end{align*}
Then, for any i.i.d.\ distribution, the request sequence does not belong to $S$ with high probability.
Thus, by predicting that the request sequence does not belong to $S$, one can obtain an expected profit $1-o(1)$.
On the other hand, when the input distribution is $\psi_{(1,2,\dots,n)}$,
we cannot predict that the request sequence belongs to $S$ or not with probability better than $1/2$.
Thus, $\CR_{\iid}=1+o(1)$ and $\kCR_{\rd}=2$ and hence the inequality holds.

\medskip
\noindent$\boldsymbol{(e)}$ $\CR_{\rd}<\kCR_{\ind}$.
We consider the game $G$ with $R=\{0,1\}$ and $S=\{(0,0,\dots,0,1)\}$.
Then, for any random order distribution, the request sequence belong to $S$ with probability at most $1/n$.
Thus, by predicting that the request sequence does not belong to $S$, one can get an expected profit $1-1/n$.
On the other hand, when the input distribution $D\in\Delta_{\ind}$ satisfies $D(\bm{r})=\prod_{i=1}^n D_i(r_i)$
where $D_1(0)=\cdots=D_{n-1}(0)=1$ and $D_n(0)=D_n(1)=1/2$,
we cannot predict that the request sequence belongs to $S$ or not with probability better than $1/2$.
Thus, $\CR_{\rd}=n/(n-1)$ and $\kCR_{\ind}=2$ and hence the inequality holds.
\end{proof}

\bibliographystyle{abbrv}
\bibliography{rand_online}

\appendix
\section{Omitted Proof}
\begin{lemma}\label{lemma:odds_iid}
No online algorithm can achieve a competitive ratio better than $e$ even if the input is a known i.i.d.\ distribution.
\end{lemma}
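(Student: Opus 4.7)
The plan is to exhibit a known i.i.d.\ distribution under which no online algorithm wins with probability exceeding $1/e + o(1)$ as $n \to \infty$; since $\OPT(\bm{r}) = 1$ identically for the odds-problem utility, this immediately yields $\kCR_{\iid} \ge e - o(1)$.

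I would take $D^*$ to be the i.i.d.\ Bernoulli distribution with parameter $p = c/n$, where $c = c(n)$ grows slowly to infinity with $c = o(n)$ (for concreteness, $c = \log n$). The first step is to reduce to threshold strategies: by Bruss's theorem~\cite{bruss2000}, already invoked in Section~\ref{subsec:odds} for the matching upper bound, the optimal online rule for the odds problem on any product of independent Bernoullis is of threshold type---there is some $t \in \{0, 1, \ldots, n\}$ such that the algorithm selects the first index $i > t$ with $X_i = 1$ (and nothing if no such $i$ exists). Hence the supremum over all online algorithms of the win probability equals the maximum over thresholds.

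Next I would compute the win probability for threshold $t$ by setting $k := n - t$: the algorithm wins either when $\bm{r} = (0, \ldots, 0)$ (contributing $(1-p)^n$, since $f = 1$ on the zero input regardless of $\bm{a}$), or when the first success in $\{t+1, \ldots, n\}$ coincides with the overall last success. Marginalizing out the entries $X_1, \ldots, X_t$, the second contribution equals
\begin{align*}
\sum_{i=t+1}^n (1-p)^{i-t-1} p (1-p)^{n-i} = k p (1-p)^{k-1},
\end{align*}
so $P_{\mathrm{win}}(t) = (1-p)^n + k p (1-p)^{k-1}$. A direct optimization of the second summand (its discrete ratio $g(k+1)/g(k) = (1 + 1/k)(1-p)$ crosses $1$ near $k \approx 1/p$) gives maximum value $(1-p)^{1/p - 1} = 1/e + o(1)$ as $p \to 0$, while $(1-p)^n = (1 - c/n)^n \to e^{-c} \to 0$ since $c \to \infty$. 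Therefore $\sup_t P_{\mathrm{win}}(t) = 1/e + o(1)$, which gives the claimed bound on $\kCR_{\iid}$.

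The main obstacle is really just the first step: justifying that threshold rules suffice. Invoking Bruss's theorem dispatches this instantly, but a self-contained alternative would require a backward-induction argument showing that after observing $X_i = 1$ the stop/continue decision depends only on whether the odds-ratio tail $\sum_{j>i} p/(1-p)$ exceeds $1$. The remaining asymptotic analysis is routine once $D^*$ is chosen with $np = c \to \infty$, which is the essential parameter choice that kills the all-zero contribution while leaving the optimal threshold term at the $1/e$ barrier.
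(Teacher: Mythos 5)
Your proposal is correct, but it reaches the $1/e$ barrier by a genuinely different route than the paper. The paper fixes $p=1/\sqrt{n}$ and writes down a linear program in the selection probabilities $q_i$ (objective $\sum_i (1-p)^{n-i}q_i$, constraints $q_i\le p\,(1-\sum_{j<i}q_j)$) that upper-bounds the win probability of \emph{every} online algorithm simultaneously; it then exhibits an explicit feasible dual solution whose value is $(1-1/\sqrt{n})^{\lfloor\sqrt{n}\rfloor}\to 1/e$. That argument never needs to know what the optimal policy looks like. You instead invoke Bruss's odds theorem to reduce to threshold rules, compute the exact win probability $(1-p)^n+kp(1-p)^{k-1}$ of each threshold, and optimize in closed form, with the slightly more general parameter choice $p=c/n$, $c\to\infty$, $c=o(n)$ (the paper's $p=1/\sqrt n$ is the case $c=\sqrt n$). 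Your computation is correct: the ratio test puts the optimal $k$ near $1/p-1\le n$, the maximum is $(1-p)^{1/p-1}\to 1/e$, the all-zero bonus $(1-p)^n\to 0$, and since the all-zero term of $f$ is independent of the algorithm's answers, maximizing the game's utility is indeed equivalent to Bruss's stopping problem, so his theorem applies (one should also note, as is standard, that randomization cannot help against a fixed known distribution, so deterministic stopping rules suffice). What each approach buys: yours is shorter and yields the exact optimal policy and value for every $n$, at the cost of importing the odds theorem (or the backward-induction argument you sketch); the paper's LP-duality argument is self-contained and bounds all algorithms at once without any structural characterization, which is a technique that transfers to settings where the optimal policy is not of threshold type.
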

\begin{proof}
Let us assume that $\Pr[X_i=1]=1/\sqrt{n}$ and $\Pr[X_i=0]=1-1/\sqrt{n}$ for each $i=1,\dots,n$.
Consider the following linear programming (LP):
\begin{align*}
\begin{array}{rll}
\max       \quad&\sum_{i=1}^{n} {\left(1-\frac{1}{\sqrt{n}}\right)}^{n-i}q_{i}&\\
\text{s.t.}\quad&\sqrt{n}\cdot q_{i}\le 1-\sum_{j=1}^{i-1}q_j&(i\in[n]),\\
                &q_{i}\ge 0&(i\in[n])\\
\end{array}
\end{align*}
where \([n]=\{1,2,\dots,n\}\).

We claim that the optimal value for the LP gives
an upper bound of the probability of win for the problem.
Let $q_{i}$ be the probability of selecting the $i$th random variable as success.
Then the probability of win is
\begin{align*}
\sum_{i=1}^{n} {\left(1-\frac{1}{\sqrt{n}}\right)}^{n-i}q_{i}
\end{align*}
because if $X_{i+1}=\cdots=X_n=0$ with probability ${\left(1-\frac{1}{\sqrt{n}}\right)}^{n-i}$.
Also, $p_{ij}$ must satisfy the following relation:
\begin{align*}
  p_{i}
  &\le \Pr[\text{$1,\dots,i-1$ are not selected}]\cdot\Pr[X_i=1]
  =\left(1-\sum_{j=1}^{i-1}q_j\right)\cdot\frac{1}{\sqrt{n}}.
\end{align*}
Thus, the LP present an upper bound of the success probability.

To evaluate the optimal value, now we consider the dual problem:
\begin{align*}
\begin{array}{rll}
\min       \quad&\sum_{i=1}^{n} r_{i}&\\
\text{s.t.}\quad&\sqrt{n}\cdot r_i+\sum_{j=i+1}^n r_j\ge {\left(1-\frac{1}{\sqrt{n}}\right)}^{n-i}&(i\in[n]),\\
                &r_{i}\ge 0&(i\in[n]).
\end{array}
\end{align*}
Let $r_i^*=\max\{\frac{1}{\sqrt{n}}\cdot{(1-1/\sqrt{n})}^{n-i}-\frac{n-i}{n}\cdot{(1-1/\sqrt{n})}^{n-i-1},0\}$.
Then it is easy to check that $r_i^*$ is a feasible solution for the dual LP\@.

Thus, the optimal value of the LP (and the dual LP) is upper bounded by
\begin{align*}
  \sum_{i=1}^{n} r_{i}^*
  &=\sum_{i=1}^n \max\left\{\frac{1}{\sqrt{n}}\cdot{(1-1/\sqrt{n})}^{n-i}-\frac{n-i}{n}\cdot{(1-1/\sqrt{n})}^{n-i-1},0\right\}\\
  &=\sum_{i=n-\lfloor\sqrt{n}\rfloor+1}^n \left(\frac{1}{\sqrt{n}}\cdot{(1-1/\sqrt{n})}^{n-i}-\frac{n-i}{n}\cdot{(1-1/\sqrt{n})}^{n-i-1}\right)\\
  &= \frac{1}{\sqrt{n}}\cdot\frac{1-{(1-1/\sqrt{n})}^{\lfloor\sqrt{n}\rfloor}}{1-(1-1/\sqrt{n})}-\frac{1}{n}\cdot\frac{1-{(1-1/\sqrt{n})}^{\lfloor\sqrt{n}\rfloor}-{(1-1/\sqrt{n})}^{\lfloor\sqrt{n}\rfloor}}{1/n}\\
  &={(1-1/\sqrt{n})}^{\lfloor\sqrt{n}\rfloor}\to\frac{1}{e} \quad(n\to\infty).
\end{align*}
Hence, $1/e$ is an upper bound of the probability of win for the problem.

Moreover, since $\Pr[X_1=\cdots=X_n=0]\to 0$ as $n\to\infty$,
there exists no online algorithm that can achieve a competitive ratio better than $e$.
\end{proof}

\end{document}